\newtcbox{\mymath}[1][]{%
    nobeforeafter, math upper, tcbox raise base,
    enhanced, colframe=blue!30!black,
    colback=blue!30, boxrule=1pt,
    #1}
\newtheorem{problem}{Problem}
\newtheorem{lemma}{Lemma}  
\newtheorem{claim}{Claim}  
\newtheorem{theorem}{Theorem}
\newtheorem{definition}{Definition}
 \definecolor{ForestGreen}{rgb}{0.1333,0.5451,0.1333}
\newlength\myindent
\newcommand\bindent{%
  \begingroup
  \setlength{\itemindent}{\myindent}
  \addtolength{\algorithmicindent}{\myindent}
}
\newcommand\eindent{\endgroup}
\let\oldbibliography\thebibliography
\renewcommand{\thebibliography}[1]{%
  \oldbibliography{#1}%
  \setlength{\itemsep}{0pt}%
}
\newcommand\myeq{\mathrel{\stackrel{\makebox[0pt]{\mbox{\normalfont\tiny def}}}{=}}}
\newcommand{\bin}[2]{\text{Bin}\left(#1,#2\right)}
\newcommand{\KL}[2]{D_{\text{KL}}({#1}||{#2} )}
\newcommand{\pr}[1]{\ensuremath{{\bf{Pr}}\left[{#1}\right]}}
\newcommand{\f}{\tilde{f}}
\newcommand{\diam}{\frac{ \log{n}}{\log{\log{n}}}}
\newcommand{\beql}[1]{\begin{equation}\label{#1}}
\newcommand{\beq}[1]{\begin{equation}\label{#1}}
\newcommand{\eeq}{\end{equation}}
\newcommand{\Prob}[1]{\ensuremath{{\bf{Pr}}\left[{#1}\right]}}
\newcommand{\Mean}[1]{\ensuremath{{\mathbb E}\left[{#1}\right]}}
\newcommand{\whp}{\textit{whp}\xspace}
\newcommand{\spara}[1]{\smallskip\noindent{\bf #1}}
\begin{document}

\title{Joint Alignment From Pairwise Differences \\ with a Noisy Oracle}
\author{
Michael Mitzenmacher\thanks{Harvard University,
    \texttt{michaelm@eecs.harvard.edu}}    ~
  Charalampos E. Tsourakakis\thanks{Boston University \& ISI Foundation,
    \texttt{ctsourak@bu.edu, babis.tsourakakis@isi.it}}
}
\maketitle

\begin{abstract} 
In this work\footnote{ {\bf Note:} This paper was accepted at the 15th Workshop on Algorithms and Models for the Web Graph (WAW 2018) and has been invited to the Journal of Internet Mathematics special issue. For state-of-the-art results on the joint alignment problem, see  \cite{chen2016projected,LMToptimal2019}.} we consider the problem of recovering $n$ discrete random variables $x_i\in \{0,\ldots,k-1\}, 1 \leq i \leq n$ (where $k$ is constant) with the smallest possible number of queries to a noisy oracle that returns for a given query pair $(x_i,x_j)$ a noisy measurement of their modulo $k$ pairwise difference, i.e., $y_{ij} = x_i-x_j  \mod k$.  This is a joint discrete alignment problem with important applications in computer vision \cite{huang2013fine,zach2010disambiguating}, graph mining \cite{tsourakakis2017predicting}, and spectroscopy imaging \cite{wang2013exact}.  Our main result is a polynomial time  algorithm that learns exactly with high probability the alignment (up to some unrecoverable offset) using $O(n^{1+o(1)})$ queries.
\end{abstract}

 \section{Introduction} 
 \label{sec:intro}

Learning a joint alignment from pairwise differences is a problem with various important applications  in computer vision  \cite{huang2013fine,zach2010disambiguating}, graph mining such as predicting signed interactions in online social networks 
\cite{tsourakakis2017predicting}, databases such as entity resolution \cite{www2020,mazumdar2016clustering,mazumdar2017clustering}, and spectroscopy imaging \cite{wang2013exact}.  Formally, there exists a set $V=[n] \myeq \{0,\ldots,n-1\}$ of $n$ discrete items labeled from $0$ to $n-1$, and an assignment $g:V \rightarrow [k] \myeq \{0,\ldots,k-1\}$ according to which each  item   is assigned one out of $k \geq 2$ possible values.  The assignment function $g$ is unknown, but we  obtain a set of corrupted  samples of the pairwise differences $\{y_{i,j} \myeq  g(i)-g(j) \mod k\}_{(i,j) \in \Omega}$, where $\Omega \subseteq [n] \times [n]$ is a symmetric index set, i.e., if $(i,j) \in \Omega$, implies $(j,i) \in \Omega$. To give an example, imagine a set of $n$ images of the same object, where each $g(i)$ is the orientation/angle of the camera  when taking the $i$-th image. The goal is to recover $g$ based on these measurements, up to some global offset $c \in [k]$ that is unrecoverable\footnote{
To see why there exists an unrecoverable offset, notice that it is impossible to distinguish the $k$ sets of inputs $ \{g(i)+c\}_{i \in V}, c\in [k]$ from all pairwise differences even if there is no noise.}.  However, learning a joint alignment from such differences  is a non-convex problem by nature, since the input space is discrete and already non-convex to begin with \cite{chen2016projected}.   
  
 \vspace{3mm} 
 
\noindent {\bf Model.} We start with the following simplified model. Later we discuss how to apply our approach to a more general noise model. Let   $0<q\leq \frac{1}{2}$, and suppose that there are $k$ groups, where $k$  is a positive constant, that we number $\{0,1,...,k-1\}$ and that we think of as being arranged modulo $k$. Let $g(u)$ refer to the group number associated with a vertex $u$.  We are allowed to query a given pair of nodes only once.  When we query a tuple of nodes $(x,y)$, we obtain an oracle answer $\tilde{f}(e)$ that is a random variable distributed according to the following equation:

\begin{equation} 
  \label{eq:model2}
    \tilde{f}(e) = \left\{\begin{array}{lr}
        g(x)-g(y) \bmod k , & \text{with probability } 1-q;\\
        g(x)-g(y)+1 \bmod k, & \text{with probability } q/2;\\
        g(x)-g(y)-1 \bmod k, & \text{with probability } q/2.
        \end{array}\right.
\end{equation}

\noindent That is, we obtain the difference between the groups when no error occurs, and with probability $q$ we obtain an error that adds or subtracts one to this gap with equal probability. According to our model, once a tuple $(x,y)$ is queried, querying $(y,x)$ is not meaningful as it will provide no new information, i.e., it will be the additive inverse of the oracle answer for $(x,y)$ modulo $k$. In this work we ask the following question: 

 \begin{tcolorbox}
 \begin{problem}
\label{joint-align-problem} 
\noindent  What is the smallest number of queries we need to perform in order to recover $g$ with high probability ({\it whp})\footnote{An event $A_n$ holds with high probability ({\it whp}) if $\lim_{n\rightarrow +\infty} \Prob{A_n}=1$.} (up to some unrecoverable global offset) under the query model described by Equation~\eqref{eq:model2}? 
\end{problem}
\end{tcolorbox}

\noindent Our main contribution is the following result, stated as Theorem~\ref{mainthm}.

\begin{theorem}
\label{mainthm} 
There exists a polynomial time algorithm that performs  $O(n^{1+o(1)})$ queries, and recovers $g$ (up to some global offset) \whp when $0<q\leq \frac{1}{2}$.
\end{theorem}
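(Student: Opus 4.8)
The plan is an \emph{anchor-and-propagate} strategy that uses near-linearly many queries and never repeats a pair. In Phase~1 we pick a ``seed'' set $S\subseteq[n]$ of size $s=\Theta(\log n)$, query all $\binom{s}{2}$ pairs inside $S$, and from those answers reconstruct the restriction of $g$ to $S$ exactly (up to the unavoidable global offset). In Phase~2 we freeze the seed labels and, for every vertex $v\notin S$, query $v$ against all of $S$ and read off $g(v)$ by a plurality vote. The total number of queries is $\binom{s}{2}+n(s-1)=\Theta(n\log n)=O(n^{1+o(1)})$, and every step is a simple counting procedure, so the algorithm runs in (near-linear) polynomial time; more generous choices of $s$ (anything up to $n^{1/2+o(1)}$, with each $v\notin S$ then querying only $\Theta(\log n)$ anchors) equally give $O(n^{1+o(1)})$. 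It remains to prove both reconstructions are correct \whp.

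\emph{Phase~1 (certifying the seed).} Fix a reference $r\in S$ and set $\hat g(r)=0$; this choice pins down the offset. For each other $v\in S$ and each $c\in[k]$ let
\[
N_c(v)=\#\bigl\{\,u\in S\setminus\{r,v\}\;:\; y_{v,u}+y_{u,r}\equiv c \pmod k\,\bigr\},
\]
and put $\hat g(v)=\arg\max_c N_c(v)$. If both edges $\{v,u\}$ and $\{u,r\}$ are error free then $y_{v,u}+y_{u,r}=g(v)$, so the true value always collects the vote of such a $u$; running through the at most $9$ joint error patterns of the pair $(\{v,u\},\{u,r\})$ shows that $\Mean{N_{g(v)}(v)}-\Mean{N_c(v)}\ge \delta\,(s-2)$ for all $c\neq g(v)$, where $\delta=\delta(k,q)>0$ is an absolute constant (the only serious competitors are $c=g(v)\pm1$, and the resulting margin $1-3q+\Theta(q^2)$ is bounded away from $0$ on $0<q\le\tfrac12$ whenever $k\ge3$). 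Because each pair is queried once, the indicators summed in $N_c(v)$ for distinct $u$ depend on four pairwise-distinct edges and hence are independent, so Hoeffding's inequality gives $\Prob{N_{g(v)}(v)\le N_c(v)}\le e^{-\Omega(\delta^2 s)}$; a union bound over the at most $ks$ pairs $(v,c)$ forces $\hat g\equiv g$ on $S$ \whp once the constant in $s=\Theta(\log n)$ is large enough.

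\emph{Phase~2 (propagation).} Condition on the success of Phase~1. That event is measurable with respect to the noise on edges inside $S$ only, whereas the Phase~2 queries involve only edges between $[n]\setminus S$ and $S$; hence, under the conditioning, the Phase~2 answers retain their original product distribution. For $v\notin S$, set $\hat g(v)=\arg\max_c\#\{u\in S: y_{v,u}+\hat g(u)\equiv c\pmod k\}$. Since $\hat g(u)=g(u)$ throughout $S$, each counted term equals $g(v)$ with probability $1-q$ and $g(v)\pm1$ with probability $q/2$ each, and since $1-q>q/2$ for every $q\le\tfrac12$ the true value wins the expected vote by a constant margin; the terms are independent (distinct edges), so Hoeffding gives per-vertex failure probability $e^{-\Omega(s)}=n^{-\Omega(1)}$, and a union bound over the $n$ vertices completes the argument. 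Hence $g$ is recovered up to a global offset \whp with $O(n^{1+o(1)})$ queries in polynomial time.

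The crux is Phase~1: bootstrapping a \emph{fully correct} anchor set from only polylogarithmically many internal queries, which hinges on exhibiting a plurality margin $\delta(k,q)$ that is bounded below \emph{uniformly in} $n$. This needs the modular wrap-around for the small moduli $k\in\{2,3,4\}$ treated by hand, and it exposes a genuine degeneracy at $k=2$ with $q\to\tfrac12$, where one bit of measurement carries vanishing information and the margin $(1-2q)^2$ collapses; in that regime one must take $q$ bounded away from $\tfrac12$. Once a positive constant margin is secured, everything else is concentration of sums of independent bounded variables, together with the structural observation that the Phase~1 and Phase~2 randomness live on disjoint edge sets, so conditioning on Phase~1's success does not bias the Phase~2 votes and there is no error propagation.
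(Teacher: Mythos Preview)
Your argument is correct (with the caveat you already flag: at $k=2$, $q=\tfrac12$ the oracle carries no information, so no algorithm can succeed there; the paper's own $k=2$ theorem is likewise stated only for $0<\delta<1$). The independence claims check out: within Phase~1, for fixed $v$ and $r$ the length-$2$ votes through distinct intermediaries $u\neq u'$ depend on four distinct edges, and Phase~2 edges are disjoint from Phase~1 edges, so conditioning is harmless. Your margin calculations are also right; for $k\ge 3$ the two-step noise distribution has $P(0)-\max_{c\neq 0}P(c)$ bounded below by a positive constant on the whole interval $0<q\le\tfrac12$ (the minimum, at $q=\tfrac12$, is $\tfrac18$ for $k\ge 5$, $\tfrac18$ for $k=4$, and $\tfrac{1}{16}$ for $k=3$), so Hoeffding plus union bounds over $O(\log n)$ seed vertices and then over $n$ remaining vertices close the argument.

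This is, however, a genuinely different route from the paper. The paper queries a random graph $G(n,p)$ with $p=\Theta\bigl(\tfrac{\log n}{n}\,\delta^{-L}\bigr)$, $L=\tfrac{\log n}{\log\log n}$, and for every pair $(x,y)$ builds $n^{\Theta(\epsilon)}$ \emph{almost edge-disjoint paths of length $\Theta(L)$} via a BFS tree construction; it then analyzes the sum of errors along a length-$L$ path as a lazy random walk on $\mathbb{Z}/k\mathbb{Z}$ (diagonalizing the circulant transition matrix), obtains a gap $p_{00}^{L}-\max_j p_{0j}^{L}\ge c_k\,\delta^{L}$, and applies a read-$k$ concentration bound (Theorem~\ref{thm:readk}) to the plurality vote because the paths share the two tree ``stubs'' near $x$ and $y$. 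Your anchor-and-propagate scheme sidesteps all of this machinery: you use only length-$1$ and length-$2$ paths, ordinary Hoeffding rather than read-$k$ bounds, and no random-graph path construction. What your approach buys is simplicity and a sharper count, $O(n\log n)$ rather than $O(n\log n\cdot\delta^{-L})=n^{1+o(1)}$; what the paper's approach buys is a technique (long almost-disjoint paths plus read-$k$ concentration) that is of independent interest and that underlies their earlier $k=2$ work, but for the theorem as stated your argument is the more economical one.
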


\noindent Our result extends our recent work on clustering with a faulty oracle  \cite{tsourakakis2017predicting}, and relies on techniques developed therein. Clustering with a faulty oracle  uses a querying model according to which the faulty oracle returns a noisy answer on whether two nodes belong to the same cluster or not \cite{mazumdar2017clustering,tsourakakis2017predicting}. Learning joint alignments and clustering with a faulty oracle are equivalent when $k=2$, and become different for any $k\geq 3$.  We  analyze the fundamental case $k=2$ in Section~\ref{sec:proposed}, and  then we show how our techniques extend to $k\geq 3$.  For convenience, when $k=2$ we set the two cluster ids as $\{-1,+1\}$, rather than the two possible remainders of division by 2, i.e. $\{0,1\}$, as we do for the rest of the paper.  It is worth outlining that using  the algorithm from~\cite{tsourakakis2017predicting}, cannot solve the joint alignment problem;  indeed, even with no errors, a chain of  responses on whether two nodes belong to the same cluster along a path would not generally allow us to determine whether the endpoints of a path were in the same group or not.   

Our proposed algorithm can handle queries governed by more general error models, of the form:
$$\tilde{f}(e) = g(x)-g(y)+i \mod k \text{~~~with probability~} q_i, 0 \leq i < k.$$

\noindent That is, the error does not depend on the group values $x$ and $y$, but is
simply independent and identically distributed over the values $0$ to $k-1$.    We outline how our algorithm adapts to this more general case.

\noindent {\bf Related work.} Optimal results in terms of query complexity for an even more general version of this problem that allows for $k$ to be a non-constant function, were originally given by Chen and Cand\'{e}s  \cite{chen2016projected}. In their work they study the general noise model. Again, each pair can be queried at most once, and the noisy measurement $\tilde{f}(x,y) $ is equal to 
\begin{equation*} 
    \tilde{f}(x,y) =  
        \big(g(x)-g(y)+ \eta_{xy}\big) \bmod k
\end{equation*} 
\noindent where the additive noise values $\eta_{xy} $ are i.i.d.~random variables supported on $\{0,1,\cdots,k-1\}$, with
the following probability distribution that is slightly biased towards zero for some parameter $\delta > 0$:
\begin{equation*} 
  \Prob{\eta_{xy}=i} = \left\{\begin{array}{lr}
        \frac{1}{k}+\delta, & \text{if }i=0;\\
        \frac{1}{k}-\frac{\delta}{k-1}, & \text{for each }i\neq 0.\\
         \end{array}\right.
\end{equation*}

They design an algorithm that is non-adaptive, and the underlying queries form a random binomial graph \cite{chen2016projected}, just our proposed method. Furthermore, Chen and Cand\`{e}s  prove that for  random binomial graph  query graphs,  the minimax probability of error tends to 1 if the number of queries is less than $\Omega\left (\frac{n \log n}{k \delta^2} \right )$ \cite[Theorem 2,p. 7]{chen2016projected}. Their algorithm, based on the projected power method, has a required number of queries that matches the lower bound.  Recently Larsen, Mitzenmacher, and Tsourakakis strenghtened the lower bound of Chen and Cand\'{e}s  by proving that {\em any} non-adaptive algorithm requires $\Omega( \frac{n \log n}{k \delta^2})$ queries. Furthermore, they designed an (asymptotically) optimal, simple combinatorial algorithm both in terms of query and run time complexity  \cite{LMToptimal2019}.  The results in this work are suboptimal compared to \cite{chen2016projected,LMToptimal2019}, but they use different algorithmic techniques that are possibly of independent interest.

\section{Theoretical Preliminaries} 
\label{sec:prelim}

We use the following probabilistic results for the proofs in Section~\ref{sec:proposed}.

\begin{theorem}[Chernoff bound, Theorem 2.1  \cite{janson2011random}]\label{chernoff}
  Let $X\sim\bin{n}{p}$, $\mu=np$, $a\ge 0$ and $\varphi(x)=(1+x)\ln(1+x)-x$
  (for $x\ge -1$, or $\infty$ otherwise). Then the following inequalities
  hold:
  
    \begin{eqnarray}
    \label{chernoff_ineq_low}
    \pr{X\le \mu - a} &\le& e^{-\mu\varphi\left(\frac{-a}{\mu}\right)}
    \le e^{-\frac{a^2}{2\mu}},\\
    \label{chernoff_ineq_high}
    \pr{X\ge \mu + a} &\le& e^{-\mu\varphi\left(\frac{-a}{\mu}\right)}
    \le e^{ -\frac{a^2}{2(\mu+a/3)}}.
  \end{eqnarray} 
\end{theorem}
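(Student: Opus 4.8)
The plan is to prove both tail inequalities by the exponential moment (Chernoff--Cram\'er) method, and then recover the displayed Gaussian-type simplifications from elementary estimates on $\varphi$. Writing $X=\sum_{i=1}^n X_i$ as a sum of i.i.d.\ $\ber{p}$ indicators, the moment generating function factorizes as $\Mean{e^{\lambda X}}=(1-p+pe^\lambda)^n$ for every $\lambda\in\mathbb{R}$. The first step is the clean bound $\Mean{e^{\lambda X}}\le\exp\!\big(\mu(e^\lambda-1)\big)$, which follows from $1+x\le e^x$ applied to $x=p(e^\lambda-1)$ together with $\mu=np$; this single estimate drives both tails.

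For the upper tail I would fix $\lambda>0$ and apply Markov's inequality to $e^{\lambda X}$, giving $\pr{X\ge\mu+a}\le e^{-\lambda(\mu+a)}\Mean{e^{\lambda X}}\le\exp\!\big(\mu(e^\lambda-1)-\lambda(\mu+a)\big)$. Minimizing the exponent over $\lambda$ yields $e^\lambda=1+a/\mu$, and substituting $\lambda=\ln(1+a/\mu)$ collapses the exponent to exactly $-\mu\varphi(a/\mu)$ with $\varphi(x)=(1+x)\ln(1+x)-x$. The lower tail is symmetric: apply Markov to $e^{-\lambda X}$ with $\lambda>0$, use the same MGF bound with $e^{-\lambda}$ in place of $e^{\lambda}$, and optimize to obtain $e^{-\lambda}=1-a/\mu$ (valid for $a<\mu$; for $a\ge\mu$ the event is empty or the bound is trivial), which produces the exponent $-\mu\varphi(-a/\mu)$. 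This establishes the first inequality on each line, with the one piece of bookkeeping being to keep the argument $\varphi(a/\mu)$ for the upper tail distinct from $\varphi(-a/\mu)$ for the lower tail.

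It remains to pass from the $\varphi$-bounds to the quadratic forms. For the lower tail I would show $\varphi(-y)\ge y^2/2$ on $[0,1)$ by setting $h(y)=\varphi(-y)-y^2/2=(1-y)\ln(1-y)+y-y^2/2$ and checking $h(0)=0$, $h'(y)=-\ln(1-y)-y$ so $h'(0)=0$, and $h''(y)=y/(1-y)\ge0$; convexity then forces $h\ge0$, and the substitution $y=a/\mu$ gives $\mu\varphi(-a/\mu)\ge a^2/(2\mu)$. For the upper tail I would instead establish $\varphi(x)\ge \frac{x^2}{2(1+x/3)}$ for $x\ge0$, which after $x=a/\mu$ yields $\mu\varphi(a/\mu)\ge\frac{a^2}{2(\mu+a/3)}$, matching the stated bound.

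The main obstacle I expect is this last inequality $\varphi(x)\ge x^2/\big(2(1+x/3)\big)$: unlike the lower-tail estimate it does not follow from a two-term Taylor comparison. The natural route is to clear denominators and prove $g(x)=\varphi(x)(1+x/3)-x^2/2\ge0$ by differentiating twice and tracking the sign of $\ln(1+x)$ for $x\ge0$, or alternatively to compare the power series of $\varphi(x)$ term by term against that of $x^2/\big(2(1+x/3)\big)$. Everything else reduces to routine calculus once the MGF bound $\Mean{e^{\lambda X}}\le e^{\mu(e^\lambda-1)}$ and the two optimizations are in hand.
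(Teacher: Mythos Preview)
The paper does not prove this statement at all: it is quoted verbatim as a preliminary from \cite{janson2011random} (Theorem~2.1 there) and used later without further argument. So there is nothing in the paper to compare your proof against.

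That said, your outline is exactly the standard Chernoff--Cram\'er derivation found in \cite{janson2011random}, and it is correct. Two small remarks. First, you are right to flag the bookkeeping: as printed in the paper both displayed bounds carry the argument $\varphi(-a/\mu)$, but the upper-tail exponent that actually falls out of the optimization is $-\mu\varphi(a/\mu)$, as you write; the second inequality on that line, $e^{-a^2/(2(\mu+a/3))}$, is indeed the simplification of $e^{-\mu\varphi(a/\mu)}$, not of $e^{-\mu\varphi(-a/\mu)}$. Second, for the upper-tail estimate $\varphi(x)\ge x^2/\bigl(2(1+x/3)\bigr)$, differentiating $g(x)=\varphi(x)(1+x/3)-x^2/2$ twice leaves $g''(x)=\tfrac{2}{3}\ln(1+x)-\tfrac{2x}{3(1+x)}$, which still needs one more step; a third differentiation gives $g'''(x)=\tfrac{2x}{3(1+x)^2}\ge0$, and then $g(0)=g'(0)=g''(0)=0$ cascades to $g\ge0$. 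With that adjustment your plan goes through cleanly.
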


 \noindent We define the notion of read-$k$ families, a useful concept when proving concentration results for weakly dependent variables. 
 
\vspace{0.4cm}
\begin{definition}[Read-$k$ families]
Let $X_1, \dots,   X_m$ be independent random variables. For $j \in [r]$, let $P_j \subseteq [m]$ and let $f_j$ be a Boolean function of $\{ X_i \}_{i\in P_j}$. Assume that 
$ |\{ j | i \in P_j \}| \leq k$  for every $i\in [m]$. Then, the random variables $Y_j=f_j(\{ X_i \}_{i\in P_j})$ are called a read-$k$ family.
\end{definition} 

\noindent The following result was proved by Gavinsky et al. for concentration of read-$k$ families. The intuition is that when $k$ is small, we can still obtain strong concentration results. 

\vspace{0.4cm}
\begin{theorem}[Concentration of Read-$k$ families \cite{gavinsky2014tail}] 
\label{thm:readk1}
Let $Y_1,\ldots,Y_r$ be a family of read-$k$ indicator variables with $\Prob{Y_i=1}=q$. Then for any $\epsilon>0$,
\beql{readk-upper-bound1}
\Prob{ \sum_{i=1}^r Y_i \geq (q+\epsilon) r } \leq e^{-\KL{q+\epsilon}{q} \cdot r/k} 
\eeq 
and
\beql{readk-lower-bound}
\Prob{ \sum_{i=1}^r Y_i\leq (q-\epsilon) r} \le e^{-\KL{q-\epsilon}{q} \cdot r/k}.
\eeq
\end{theorem}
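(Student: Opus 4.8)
The plan is to follow the classical Chernoff--Cram\'er exponential-moment method, letting the read-$k$ structure enter through a single decoupling inequality that stands in for the independence step. For the upper tail I would fix $\lambda>0$, apply Markov's inequality to $e^{\lambda\sum_i Y_i}$, and reduce the statement to controlling the joint moment generating function $\Mean{\prod_{i=1}^r e^{\lambda Y_i}}$. Were the $Y_i$ independent this would factor as $\prod_i \Mean{e^{\lambda Y_i}}$; the whole content of the theorem is to quantify the loss caused by their weak dependence, and this is where I expect the real work to lie.

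The key step, and the main obstacle, is a generalized H\"older inequality adapted to read-$k$ families: for nonnegative functions $h_j = h_j(\{X_i\}_{i\in P_j})$ in which every $X_i$ belongs to at most $k$ of the index sets $P_j$, one has $\Mean{\prod_{j=1}^r h_j}\le \prod_{j=1}^r\big(\Mean{h_j^{\,k}}\big)^{1/k}$. I would obtain this from a Finner-type generalized H\"older inequality: assigning the common exponent $k$ to every factor, the read-$k$ hypothesis makes the fractional-cover condition $\sum_{j:\,i\in P_j}\tfrac1k\le 1$ hold at each variable $X_i$, which is exactly what the inequality requires. A self-contained alternative is induction on the number of underlying variables: isolate a variable $X_m$ read by $S=\{j:m\in P_j\}$ with $|S|\le k$, condition on the remaining variables, and apply ordinary generalized H\"older to the $|S|$ factors that depend on $X_m$. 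The delicate point in the inductive route is reconciling the $|S|$-th powers that H\"older naturally produces with the uniform $k$-th powers in the target bound while keeping the read-count bookkeeping consistent across the induction.

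Granting the product inequality, I would set $h_j=e^{\lambda Y_j}$. Since each $Y_j$ is a Bernoulli$(q)$ indicator, $\Mean{e^{k\lambda Y_j}}=1-q+qe^{k\lambda}$, so the joint MGF is at most $(1-q+qe^{k\lambda})^{r/k}$, and the Markov step yields, for every $\lambda>0$,
$$\Prob{\sum_{i=1}^r Y_i\ge(q+\e)r}\le\Big[e^{-\lambda k(q+\e)}\big(1-q+qe^{k\lambda}\big)\Big]^{r/k}.$$
Writing $s=k\lambda$ and minimizing $e^{-s(q+\e)}(1-q+qe^{s})$ over $s>0$ is precisely the one-dimensional optimization behind the Chernoff bound for a single Bernoulli variable: the minimizer is $e^{s}=\frac{(q+\e)(1-q)}{q(1-q-\e)}$, at which the bracket equals $e^{-\KL{q+\e}{q}}$. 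Raising to the power $r/k$ gives \eqref{readk-upper-bound1}.

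The lower tail \eqref{readk-lower-bound} follows by symmetry. The complementary indicators $1-Y_i$ form a read-$k$ family with success probability $1-q$, and $\{\sum_i Y_i\le(q-\e)r\}$ is the same event as $\{\sum_i(1-Y_i)\ge(1-q+\e)r\}$. Applying the upper-tail bound just established with success probability $1-q$ and threshold excess $\e$, together with the identity $\KL{1-q+\e}{1-q}=\KL{q-\e}{q}$, produces exactly the stated estimate.
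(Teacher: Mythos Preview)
The paper does not actually prove this statement: Theorem~\ref{thm:readk1} is quoted from \cite{gavinsky2014tail} as a preliminary result and is used as a black box in Section~\ref{sec:proposed}. There is therefore no in-paper proof to compare against.

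That said, your proposal is correct and is essentially the argument given in \cite{gavinsky2014tail}. The heart of the matter is exactly the Finner-type generalized H\"older inequality you identify, $\Mean{\prod_j h_j}\le\prod_j\bigl(\Mean{h_j^{\,k}}\bigr)^{1/k}$, valid because the read-$k$ hypothesis makes the fractional-cover condition $\sum_{j:\,i\in P_j}\tfrac{1}{k}\le 1$ hold at every underlying variable $X_i$. Once this decoupling is in hand, plugging in $h_j=e^{\lambda Y_j}$ reduces the problem to the classical single-Bernoulli Cram\'er--Chernoff optimization, whose optimum at $e^{s}=\frac{(q+\e)(1-q)}{q(1-q-\e)}$ yields the KL exponent, and the lower tail follows by passing to $1-Y_i$ and using $\KL{1-q+\e}{1-q}=\KL{q-\e}{q}$. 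The one place you flag as delicate---the inductive derivation of the product inequality and matching the $|S|$-th powers to uniform $k$-th powers---can be handled cleanly by first padding each $h_j$ to depend on exactly $k$ variables (adding dummy dependencies if $|P_j|<k$), but invoking Finner's inequality directly, as you also suggest, avoids this bookkeeping entirely.
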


\noindent Here, $D_{\text{KL}}$ is Kullback-Leibler divergence defined as 

$$ \KL{q}{p} = q \log{ \left( \frac{q}{p} \right) }+ (1-q) \log{ \left( \frac{1-q}{1-p} \right) }.$$

The following corollary of Theorem~\ref{thm:readk1} provides multiplicative Chernoff-type bounds for read-$k$ families. Notice that the parameter $k$ appears as an extra factor in denominator of  the exponent, that is why when $k$ is relatively small we  still obtain meaningful concentration results.
 
\begin{theorem}[Concentration of Read-$k$ families \cite{gavinsky2014tail}] 
\label{thm:readk}
Let $Y_1,\ldots,Y_r$ be a family of read-$k$ indicator variables with $\Prob{Y_i=1}=q$. Also, let $Y=\sum_{i=1}^r Y_i$. Then for any $\epsilon>0$,
\beql{readk-upper-bound}
\Prob{  Y \geq  (1+\epsilon) \Mean{Y} } \leq e^{-\frac{\epsilon^2 \Mean{Y}}{2k(1+\epsilon /3)} }
\eeq  

\beql{readk-lower-bound}
\Prob{  Y \leq  (1- \epsilon) \Mean{Y} } \leq e^{-\frac{\epsilon^2 \Mean{Y}}{2k} }.
\eeq  
\end{theorem}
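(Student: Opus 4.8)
The plan is to deduce the multiplicative Bernstein-type bounds of Theorem~\ref{thm:readk} directly from the sharper relative-entropy estimates of Theorem~\ref{thm:readk1}, through two elementary reductions: converting a multiplicative deviation into an additive one, and then lower-bounding the Kullback--Leibler divergence by the familiar quadratic expression.

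First I would note that, since each $Y_i$ is an indicator with $\Prob{Y_i=1}=q$, linearity of expectation gives $\Mean{Y}=qr$. Consequently the event $\{Y \geq (1+\epsilon)\Mean{Y}\}$ coincides with $\{\sum_{i=1}^{r} Y_i \geq (q+\epsilon q)r\}$, an additive deviation of size $\epsilon q$ above the mean fraction $q$. Invoking the upper-tail bound of Theorem~\ref{thm:readk1} with $\epsilon q$ playing the role of its additive parameter yields
\[
\Prob{Y \geq (1+\epsilon)\Mean{Y}} \leq e^{-\KL{(1+\epsilon)q}{q}\cdot r/k},
\]
and, symmetrically, its lower-tail bound gives $\Prob{Y \leq (1-\epsilon)\Mean{Y}} \leq e^{-\KL{(1-\epsilon)q}{q}\cdot r/k}$. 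It therefore suffices to establish the two divergence estimates
\[
\KL{(1+\epsilon)q}{q} \geq \frac{\epsilon^{2} q}{2(1+\epsilon/3)}, \qquad \KL{(1-\epsilon)q}{q} \geq \frac{\epsilon^{2} q}{2},
\]
since multiplying each by $r/k$ reproduces exactly the exponents $\frac{\epsilon^{2}\Mean{Y}}{2k(1+\epsilon/3)}$ and $\frac{\epsilon^{2}\Mean{Y}}{2k}$ appearing in \eqref{readk-upper-bound} and its lower-tail counterpart.

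The heart of the argument, and the only place I expect genuine work, is these two divergence inequalities; I would prove each in two stages. Writing $\varphi(x)=(1+x)\ln(1+x)-x$ as in Theorem~\ref{chernoff}, the first stage is the claim $\KL{(1\pm\epsilon)q}{q} \geq q\,\varphi(\pm\epsilon)$. To verify it, subtract $q\,\varphi(\pm\epsilon)$ from the corresponding divergence and set $t=\epsilon q/(1-q)$; the remainder then collapses to $(1-q)\,h(t)$ in the upper case, with $h(t)=(1-t)\ln(1-t)+t$, and to $(1-q)\,\varphi(t)$ in the lower case. Both remainders are nonnegative, since $h(0)=0$ with $h'(t)=-\ln(1-t)\geq 0$, and $\varphi(t)\geq 0$ for $t\geq -1$; in words, the extra $(1-q)$-weighted term carried by the binomial divergence over the Poisson exponent $q\,\varphi$ always works in our favor. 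The second stage invokes the one-variable inequalities $\varphi(\epsilon)\geq \frac{\epsilon^{2}}{2(1+\epsilon/3)}$ for $\epsilon\geq 0$ and $\varphi(-\epsilon)\geq \frac{\epsilon^{2}}{2}$ for $0\leq \epsilon<1$, which are exactly the estimates that convert the exact exponent $\mu\,\varphi(\cdot)$ of Theorem~\ref{chernoff} into its simplified forms $\frac{a^{2}}{2(\mu+a/3)}$ and $\frac{a^{2}}{2\mu}$, and which follow from a routine Taylor/convexity argument. Chaining the two stages delivers the required divergence estimates.

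The one subtlety worth flagging is that the read-$k$ parameter enters Theorem~\ref{thm:readk1} only through the factor $1/k$ in the exponent, so it passes untouched through both reductions; this is precisely why the final bounds mirror the standard independent-variable Chernoff inequalities up to the single extra factor of $k$ in the denominator, and why no independence beyond that already assumed in Theorem~\ref{thm:readk1} is required.
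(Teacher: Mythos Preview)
Your derivation is correct and is precisely the standard reduction one expects: rewrite the multiplicative deviation $(1\pm\epsilon)\Mean{Y}$ as the additive deviation $(q\pm\epsilon q)r$, apply Theorem~\ref{thm:readk1}, and then lower-bound $\KL{(1\pm\epsilon)q}{q}$ first by $q\varphi(\pm\epsilon)$ and then by the quadratic forms from Theorem~\ref{chernoff}. Your computation of the remainder terms $(1-q)h(t)$ and $(1-q)\varphi(t)$ with $t=\epsilon q/(1-q)$ is clean and correct.

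As for comparison: the paper does not actually give a proof of this statement. Theorem~\ref{thm:readk} is quoted from \cite{gavinsky2014tail} and simply introduced as ``the following corollary of Theorem~\ref{thm:readk1},'' with no argument supplied. So there is no alternative approach in the paper to contrast with yours; you have filled in exactly the gap the authors left implicit, and in the natural way.
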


 \section{Proposed Method} 
 \label{sec:proposed} 

\spara{Proof strategy.} Our proposed algorithm is heavily based on our work for the case $k=2$, a special case of the joint alignment problem of interest to the social networks' community \cite{tsourakakis2017predicting}.  
 According to our, we may query any pair of nodes once, and we receive the correct answer on whether the two nodes are in the same cluster, or not, with probability $1-q=\frac{1+\delta}{2}$. Here, $0<\delta<1$ is the {\em bias}.  In both cases $k=2$ and $k \geq 3$, the structure of the algorithmic analysis is identical. Let  $L= \diam$. At a high level, our proof strategy is as follows: 

\begin{enumerate}  
\item We perform $n\Delta$ queries uniformly at random. We set $\Delta = O( \frac{\log{n}}{\delta^{L}})$. 
\medskip
\item We compute the probability that a  path between $x$ and $y$ provides us with the correct information on  $g(x)-g(y)$ or not.  
\medskip
\item	 We show that there exists a large number of {\em almost edge-disjoint paths} of length $L$ between any pair of vertices with probability at least $1-\frac{1}{n^3}$.   
\medskip
\item  To learn the difference $g(x)-g(y)$ for any pair of nodes $\{x,y\}$, we take a majority vote  ($k=2$), or a plurality vote  ($k\geq 3$), among the paths we have created. A union bound in combination with (2) shows that \whp we learn $g$ up to some uknown offset. 
\medskip
\end{enumerate}

\begin{algorithm}[t]
\caption{\label{alg:2cc} Learning Joint Alignment for $k=2$} 
 \begin{algorithmic} 
\STATE $L \leftarrow \diam$
\STATE Perform $ 20n\log{n} \delta^{-L}$ queries uniformly at random.
\STATE Let $G(V,E,\f)$ be the resulting graph, $\f:E\rightarrow \{+1,-1\}$ 
 \FOR{each item pair $x,y$} 
 \STATE {$\mathcal{P}_{x,y}=\{P_1,\ldots,P_N\} \leftarrow$ Almost-Edge-Disjoint-Paths($x,y$)}  
 \STATE  $Y_i \leftarrow \prod_{e \in P_i} \f(e)$ for $i=1,\ldots,N$  
 \STATE $Y_{xy} \leftarrow \sum_{P \in \mathcal{P}_{u,v}} Y_P $
  \IF{$Y_{xy} \geq 0$} 
  \STATE {predict $g(x) = g(y)$} 
  \ELSE 
  \STATE{predict $g(x)\neq g(y)$} 
  \ENDIF
 \ENDFOR
\end{algorithmic}
\end{algorithm}

\spara{Key differences with prior work \cite{tsourakakis2017predicting}.}  While this work relies on \cite{tsourakakis2017predicting}, there are some key differences. Our main result in \cite{tsourakakis2017predicting} is that when there exist two latent clusters ($k=2$), we can recover them \whp using $O(n \log n/\delta^4)$ queries, i.e., $\Delta = O(\log n/\delta^4)$. In this work  we need to set $\Delta=O(\log n \delta^{-L})$, i.e.,  we perform a larger number of queries. Here, $L=\diam$.  An interesting open question is to reduce the number of queries when $k\geq 3$.   Since the models are different, step 2 also differs. Furthermore, the algorithm proposed in  \cite{tsourakakis2017predicting}, and the one we propose here are  different;  in \cite{tsourakakis2017predicting} we use a recursive algorithm that we analyze using Fourier analysis to get  a near-optimal result with respect to the number of queries\footnote{The information theoretic lower bound on the number of queries is $O(n\log n/\delta^2)$ \cite{hajek2016achieving}.}. Here,
we use concentration of multivariate polynomials \cite{gavinsky2014tail}, see also \cite{alonspencer,elenberg2015beyond,kim-vu,tsourakakis2011triangle}, to analyze the plurality vote of the paths that we construct between a given pair of nodes. Steps 3, 4 are almost identical both in \cite{tsourakakis2017predicting}, and here. The key difference is that our algorithm requires an average degree  $O\left (\frac{\log n}{\delta^L} \right )$ {\em only for the first level} of certain trees that we grow, for the rest of the levels a branching factor of order $O(\log n)$ suffices.

\begin{algorithm}[t]
\caption{\label{alg:edgeDisjointPaths}  Almost-Edge-Disjoint-Paths($x,y$)} 
 \begin{algorithmic}  
\STATE \textbf{Input:} $G(V,E,\f)$, $x,y \in V(G)$  \\
\STATE \textbf{Output:} Set of paths between $x,y$  \\
\STATE - Set $L \leftarrow \diam$, $\epsilon \leftarrow \frac{1}{\sqrt{\log\log{n}}}$
\STATE - Using Breadth First Search (BFS) grow a tree $T_x$  starting from $x$ as follows.  
\bindent
\STATE $\bullet$ For the first level of the tree, we choose $\frac{4\log{n}}{\delta^{L}}$ neighbors of $x$. 
\STATE  $\bullet$  For the rest of the tree we use a branching factor  $4\log{n}$ until it reaches depth equal to $\epsilon L$. 
    \eindent
 
\STATE  - Similarly, grow a tree $T_y$ rooted at $y$, node disjoint from $T_x$ of equal depth.
\STATE - Connect each leaf $x_i$ of $T_x$ to its isomorphic copy $y_i$ of $T_y$ for $i=1,\ldots,N$, where  $N$ is the total number of leaves in the two isomorphic trees:
 \bindent
\STATE  $\bullet$   From  $x_i$ of  $T_x$ (resp. $y_i$ of $T_y$),	grow  node disjoint trees until they reach depth $(\frac{1}{2}+\epsilon)L$ with branching factor $4\log{n}$.  

\STATE    $\bullet$  Finally, find an edge between $T_{x_i},T_{y_i}$  for each $i=1,\ldots,N$.
    \eindent
\STATE Return the set of constructed paths between $x,y$.
\end{algorithmic}
\end{algorithm}

\spara{An algorithm for $k=2$.} We  describe an algorithm for $k=2$, that directly generalizes to  $k\geq 3$. The caveat is that our proposed algorithm is sub-optimal with respect to the number of queries achieved in \cite{tsourakakis2017predicting}. The model for $k=2$ gets simplified to the following: let $V=[n]$ be the set of $n$ items that belong to two clusters, call them red and blue. Set $g:V \rightarrow \{\text{red},\text{blue}\}$, $R = \{v \in V(G): g(v) = \text{red} \}$ and $B = \{v \in V(G): g(v) = \text{blue} \}$, where  $0 \leq |R| \leq n$.  The function $g$ is unknown and we wish to recover the two clusters $R,B$  by querying pairs of items. (We need not recover the labels, just the clusters.)  For each query we receive the correct answer with probability $1-q=\frac{1+\delta}{2}$, where $q>0$ is the corruption probability. That is, for a pair of items $x,y$ such that $g(x)=g(y)$, with probability $q$ it is reported that  $g(x) \neq g(y)$, and similarly if $g(x) \neq g(y)$ with probability $q$ it is reported that $g(x) = g(y)$.  Since many of the lemmas in this work are proved in a similar way as in \cite{tsourakakis2017predicting}, we outline the key differences between this work and the proof in  \cite{tsourakakis2017predicting}. We prove the following Theorem. 

\begin{theorem}
\label{thm:thrm1} 
There exists a polynomial time algorithm that performs $O(\frac{n \log{n}}{ \delta^{L}})$ edge queries and recovers the clustering $(R,B)$ \whp for any gap $0< \delta < 1$.  
\end{theorem}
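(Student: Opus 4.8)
The plan is to follow the four-step strategy above and collapse it into a union bound over pairs. Recovering $(R,B)$ up to relabelling is the same as deciding, for every pair $\{x,y\}$, whether $g(x)=g(y)$ (relabelling being exactly the unrecoverable offset), so it suffices to show that, with probability $1-o(1)$ over both the random query graph $G$ and the oracle's noise, Algorithm~\ref{alg:2cc} predicts correctly for all $\binom{n}{2}$ pairs simultaneously; a union bound reduces this to: a single pairwise prediction fails with probability $n^{-2-\Omega(1)}$. Fix a pair; the prediction is $\sgn\!\big(Y_{xy}\big)$ with $Y_{xy}=\sum_P Y_P$, $Y_P=\prod_{e\in P}\f(e)$, over the paths returned by Algorithm~\ref{alg:edgeDisjointPaths}. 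Step~2 is an exact computation: with the true labels $\prod_{e\in P}f(e)=g(x)g(y)$ by telescoping (hence $=+1$ iff $g(x)=g(y)$); each $\f(e)$ independently equals $f(e)$ with probability $1-q=\tfrac{1+\delta}{2}$ and $-f(e)$ otherwise; so $Y_P=g(x)g(y)$ precisely when an even number of flips occurs on $P$, an event of probability $\tfrac12\big(1+(1-2q)^{|P|}\big)=\tfrac12\big(1+\delta^{|P|}\big)$ --- note this does not depend on $g(x),g(y)$. Thus each $Y_P$ is a $\pm1$ vote for $g(x)g(y)$ with bias $\delta^{|P|}$, all paths share the same length $|P|=(1+o(1))L$ with $L=\diam$, and the vote $Y_{xy}$ over $N$ of them has the right sign unless it falls short of $\Mean{Y_{xy}}=g(x)g(y)\,N\delta^{|P|}$ by $\gtrsim N\delta^{|P|}$.

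\textbf{Step 3: many almost edge-disjoint paths --- the main obstacle.} Producing, for every pair, many almost edge-disjoint $x$-$y$ paths of length $(1+o(1))L$ is the structural heart, and I expect it to be the main obstacle. On $G\sim G(n,p)$ with $p=\Theta(\Delta/n)$ and $\Delta=\Theta(\log n\cdot\delta^{-O(L)})$ (note $\Delta=n^{o(1)}$, so $n\Delta=n^{1+o(1)}$), I would show that whp the construction of Algorithm~\ref{alg:edgeDisjointPaths} succeeds for all pairs at once: (i) every vertex has degree $\gtrsim\Delta$, so the inflated first level of $T_x$, $T_y$ is available, and a standard BFS-expansion argument for $G(n,p)$ maintains branching $\Theta(\log n)$ at every deeper level, giving node-disjoint trees of depth $\epsilon L$ with $N=n^{\Theta(1/\sqrt{\log\log n})}=n^{o(1)}$ leaves; (ii) from each leaf the subtrees $T_{x_i},T_{y_i}$ of depth $(\tfrac12+\epsilon)L$ can be grown node-disjointly --- the total vertex budget is $n^{1/2+o(1)}\le n$ --- each with $n^{1/2+o(1)}$ leaves; (iii) each matched pair $(T_{x_i},T_{y_i})$ has a $G$-edge between its leaf-sets, because the expected number of such edges is $n^{\Omega(\epsilon)}\to\infty$ and Theorem~\ref{chernoff} together with a union bound over the $N$ matched pairs controls the deviation; a final union bound over the $\binom{n}{2}$ outer pairs handles (i)--(iii). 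The resulting $N$ paths are almost edge-disjoint: every edge off $T_x\cup T_y$ lies on $O(1)$ of them, an edge at level $j\ge2$ of $T_x$ (or $T_y$) on at most $(4\log n)^{\epsilon L-j}$, and the worst case is a level-$1$ edge, used by $N/b_1$ of the paths, where $b_1=\Theta(\log n\cdot\delta^{-O(L)})$ is the first-level branching; put $k=\lceil N/b_1\rceil$. The choice $\epsilon=1/\sqrt{\log\log n}$ is exactly what makes all these requirements compatible --- paths of length $(1+o(1))L$, $N$ both $n^{o(1)}$ and superpolylogarithmic, vertex budget below $n$, and $k$ small enough for Step~4.

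\textbf{Step 4: concentration of the vote, and conclusion.} Condition on the (whp) successful structure. The votes $Y_P$ are functions of the independent edge labels $\{\f(e)\}$, and each edge lies on at most $k$ of the $N$ paths, so the indicators $Z_P=\One{Y_P=g(x)g(y)}$ form a read-$k$ family with $\Prob{Z_P=1}=\tfrac12\big(1+\delta^{|P|}\big)$, and the prediction for $\{x,y\}$ is correct iff $\sum_P Z_P>N/2$. Applying Theorem~\ref{thm:readk1} (equivalently Theorem~\ref{thm:readk}) to the lower tail of $\sum_P Z_P$, with deviation parameter of order $\delta^{|P|}$, bounds the failure probability by $\exp\!\big(-\Omega\big(\delta^{2|P|}N/k\big)\big)=\exp\!\big(-\Omega\big(\delta^{2|P|}\,b_1\big)\big)$. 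Forcing this exponent above $3\log n$ --- which is precisely why the first level is inflated to $b_1=\Theta(\log n\cdot\delta^{-O(L)})$, so as to compensate the per-path bias decay $\delta^{2|P|}$, and where I do not try to optimize the constant in the exponent --- is the technical crux; granting it, and since the structural failures of Step~3 are $n^{-\omega(1)}$-rare per pair, a union bound over the $\binom{n}{2}$ pairs shows that whp \emph{every} pairwise prediction is correct, so Algorithm~\ref{alg:2cc} outputs $(R,B)$ exactly. The BFS growth and the vote run in polynomial time per pair, and the total number of queries is $n\Delta=O\big(n\log n\cdot\delta^{-O(L)}\big)=n^{1+o(1)}$.
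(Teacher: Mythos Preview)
Your plan is essentially the paper's proof: the same four-step strategy, the same path-bias computation (Claim~\ref{claim1}), the same structural construction of almost edge-disjoint paths (Lemma~\ref{lem4}), and the same read-$k$ concentration plus union bound over $\binom{n}{2}$ pairs.

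One point of divergence worth flagging: in Step~4 you (correctly) pass to the indicators $Z_P$ before invoking Theorem~\ref{thm:readk}, so your exponent is $\Theta(\delta^{2|P|}\,N/k)=\Theta(\delta^{2|P|}\,b_1)$, which forces $b_1=\Theta(\log n\cdot\delta^{-(2+o(1))L})$ and hence a query count of $O(n\log n\cdot\delta^{-O(L)})$ rather than the $O(n\log n/\delta^L)$ stated in the theorem. The paper instead applies Theorem~\ref{thm:readk} directly to the $\pm1$ variables $Y_i$ with multiplicative parameter $\epsilon=1$, reading off the exponent $\Mean{Y}/(2k)=N\delta^L/(2k)$ and thereby saving one factor of $\delta^{-L}$; since Theorem~\ref{thm:readk} is stated for indicator variables, this is a mild abuse, and your version is arguably the more careful one. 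Either bookkeeping yields $n^{1+o(1)}$ queries, so the theorem (and the downstream Theorem~\ref{mainthm}) is unaffected, but as written your conclusion does not quite match the $\delta^{-L}$ in the statement.
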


The pseudo-code is shown as Algorithm~\ref{alg:2cc}.  The algorithm runs over each pair of nodes, and it invokes Algorithm~\ref{alg:edgeDisjointPaths} to construct  almost edge-disjoint paths for each pair of nodes $x,y$ using Breadth First Search.   Note that since we perform $20n\log{n}\delta^{-L}$ queries uniformly at random, the resulting graph is  is asymptotically equivalent to $G \sim G(n,\frac{40\log{n}\delta^{-L}}{n})$, see \cite[Chapter 1]{frieze2015introduction}. Here, $G(n,p)$ is the classic Erd\H{o}s-R\'{e}nyi model  (a.k.a random binomial graph model) where each possible edge between each pair  $(x,y) \in {[n] \choose 2}$ is included in the graph with probability $p$ independent from every other edge.  

It turns out that our algorithm needs an average degree  $O\left (\frac{\log n}{\delta^L} \right )$ {\em only for the first level} of the trees $T_x,T_y$ that we grow from $x$ and $y$ when we invoke Algorithm~\ref{alg:edgeDisjointPaths}. For all other levels of the grown trees, we need the degree to be only $O(\log{n})$. This difference in the branching factors exists in order to ensure that the number of leaves of trees $T_x,T_y$ in  Algorithm~\ref{alg:edgeDisjointPaths} is amplified by a factor of $\frac{1}{\delta^L}$, which then allows us to apply Theorem~\ref{thm:readk}.    Using appropriate data structures, a straight-forward implementation of Algorithm~\ref{alg:2cc} runs in $O(n^2(n+m))=O(n^3\log{n}\delta^{-L})$.  
Since we use a branching factor of $O(\log n)$ for all except the first two levels of $T_x,T_y$,  we work with the $G(n,p)$ model with $p=\frac{40\log{n}}{n}$ to construct the set of almost edge disjoint paths. (Alternatively, one can think that we start with the larger random graph with more edges, and then in the construction of the almost edge disjoint paths we subsample a smaller collection of edges to use in this stage.)  The diameter of this graph \whp grows asymptotically as $L$ \cite{bollobas1998random} for this value of $p$. We use the  $G(n,\frac{40\log{n}\delta^{-L}}{n})$ model only in Lemma \ref{lem1} to prove that every node has degree at least 
$5\log{n}\delta^{-L}$.   

Recall that in the case of two clusters $\f(e) \in \{-1,+1\}$, indicating whether the oracle answers that the two endpoints of $e$ lie or not in the same cluster.  The following result follows by the fact that $\f$ agrees with the unknown clustering function $g$ on $x,y$ if the number of corrupted edges along that path $P_{xy}$ is even.

\begin{claim} 
\label{claim1} 
Consider a  path $P_{xy}$ between nodes $x,y$ of length $L$. Let $R_{xy}=\prod_{e \in P_{xy}} \f(e)$. Then, 
\begin{align}
\label{eq1} 
\Prob{R_{xy}=1|g(x)=g(y)} = \Prob{R_{xy}=-1|g(x)\neq g(y)}     =\frac{1+(1-2q)^L}{2}  = \frac{1+\delta^L}{2} &\nonumber
\end{align}
\end{claim}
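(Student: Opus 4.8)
The plan is to condition on the clustering and count parity of corrupted edges along the path. Fix the path $P_{xy} = (x = v_0, v_1, \ldots, v_L = y)$ of length $L$, and for each edge $e \in P_{xy}$ let $Z_e \in \{0,1\}$ be the indicator that the oracle answer on $e$ was corrupted, i.e.\ $\f(e)$ disagrees with the true relation $g(v_{i-1}) \overset{?}{=} g(v_i)$. By the model, the $Z_e$ are independent Bernoulli$(q)$ random variables, one per edge, and crucially they are independent of the ground-truth values $g(v_0), \ldots, g(v_L)$ themselves (the noise does not depend on the group labels). The first observation I would record is that $\f(e) = s_e \cdot (-1)^{Z_e}$, where $s_e \in \{+1,-1\}$ is the \emph{true} sign of $e$ (namely $+1$ if $g(v_{i-1}) = g(v_i)$ and $-1$ otherwise). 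Multiplying along the path, $R_{xy} = \prod_{e \in P_{xy}} \f(e) = \left(\prod_{e} s_e\right) \cdot (-1)^{\sum_e Z_e}$. The telescoping product $\prod_e s_e$ equals $+1$ exactly when $g(x) = g(y)$ and $-1$ when $g(x) \neq g(y)$, since each factor $-1$ flips the running parity of the cluster membership relative to $x$. Hence $R_{xy} = \mathbf{1}(g(x)=g(y)) \cdot \mathbf{1}(g(x)\neq g(y))^{\text{sign}} \cdot (-1)^{W}$ where $W = \sum_{e \in P_{xy}} Z_e \sim \bin{L}{q}$; more cleanly, conditioned on $g(x)=g(y)$ we have $R_{xy} = (-1)^W$, and conditioned on $g(x)\neq g(y)$ we have $R_{xy} = -(-1)^W$.

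The computation then reduces to evaluating $\Prob{W \text{ is even}}$ for $W \sim \bin{L}{q}$. The standard trick is the generating-function / parity identity
\[
\Prob{W \text{ even}} - \Prob{W \text{ odd}} = \Mean{(-1)^W} = \sum_{j=0}^{L} \binom{L}{j} q^j (1-q)^{L-j} (-1)^j = (1-2q)^L.
\]
Combined with $\Prob{W \text{ even}} + \Prob{W \text{ odd}} = 1$, this gives $\Prob{W \text{ even}} = \tfrac{1 + (1-2q)^L}{2}$. Substituting $\delta = 1 - 2q$ yields $\Prob{R_{xy} = 1 \mid g(x)=g(y)} = \Prob{W \text{ even}} = \tfrac{1+\delta^L}{2}$, and since $R_{xy} = -(-1)^W$ when $g(x)\neq g(y)$, the event $R_{xy} = -1$ in that case is again exactly $\{W \text{ even}\}$, giving the same probability $\tfrac{1+\delta^L}{2}$. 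This establishes the claimed chain of equalities.

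Honestly, there is no real obstacle here: the statement is essentially a one-line consequence of independence of the edge-noise variables plus the elementary even/odd binomial identity. The only point that deserves a careful sentence is the independence claim — that the corruption indicators $Z_e$ along the path are mutually independent and independent of the labels $g(v_i)$ — which follows because each pair is queried at most once (so distinct edges get fresh, independent oracle randomness) and the noise distribution in Equation~\eqref{eq:model2}, specialized to $k=2$, is a label-independent bit flip with probability $q$. With that in hand, the telescoping of signs and the parity generating function finish the proof.
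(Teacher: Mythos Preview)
Your proof is correct and follows exactly the approach the paper indicates: the paper's justification is the single sentence that $\f$ agrees with $g$ on $x,y$ if and only if the number of corrupted edges along $P_{xy}$ is even, and you have simply written out this parity argument in full (the telescoping of true signs, the reduction to $W\sim\bin{L}{q}$, and the standard $\Mean{(-1)^W}=(1-2q)^L$ identity). There is no difference in strategy, only in level of detail.
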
 

The next lemma is a direct corollary of the lower tail multiplicative Chernoff bound. 

\begin{lemma} 
\label{lem1} 
Let $G \sim G(n,\frac{40\log{n}}{\delta^Ln})$ be a random binomial graph. Then \whp all vertices have degree greater than $5\log{n}\delta^{-L}$. 
\end{lemma}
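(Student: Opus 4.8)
The plan is a routine first-moment argument: apply the Chernoff lower tail of Theorem~\ref{chernoff} to the degree of a single vertex, then take a union bound over the $n$ vertices. First I would fix a vertex $v \in V$ and observe that in $G \sim G(n,p)$ with $p = \frac{40\log n}{\delta^L n}$ the degree $d(v)$ is a sum of $n-1$ independent indicators, i.e.\ $d(v) \sim \bin{n-1}{p}$, with mean
\[
\mu \;=\; (n-1)p \;=\; 40\log n\,\delta^{-L}\Bigl(1-\tfrac1n\Bigr) \;\ge\; 39\log n\,\delta^{-L}
\]
once $n$ is large. The point to keep in mind is that $\delta^{-L}\ge 1$ since $0<\delta<1$, so $\mu = \Omega(\log n)$ regardless of how $L$ grows.

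Next I would invoke the lower-tail bound \eqref{chernoff_ineq_low}. Since $\frac{5\log n\,\delta^{-L}}{\mu} = \frac{1}{8(1-1/n)} \le \frac17$ for $n$ large, writing the target threshold as $\mu - a$ with $a = \mu - 5\log n\,\delta^{-L}$ gives $a \ge \tfrac67\mu$, hence
\[
\Prob{d(v) \le 5\log n\,\delta^{-L}} \;\le\; e^{-a^2/(2\mu)} \;\le\; e^{-\frac{18}{49}\mu} \;\le\; e^{-\frac{18}{49}\cdot 39\log n\,\delta^{-L}} \;\le\; n^{-14},
\]
using $\delta^{-L}\ge 1$ in the last step. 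A union bound over all $n$ vertices then yields
\[
\Prob{\exists\, v \in V:\ d(v) \le 5\log n\,\delta^{-L}} \;\le\; n\cdot n^{-14} \;=\; n^{-13} \;\longrightarrow\; 0,
\]
so \whp every vertex has degree greater than $5\log n\,\delta^{-L}$, which is the claim. (Note the degrees are not independent across vertices, but the union bound needs no independence.)

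I do not expect a genuine obstacle here; the only thing requiring care is the constant bookkeeping. Specifically, one must check that the leading constant $40$ in $p$ is large enough relative to the target constant $5$ that the per-vertex failure probability is $n^{-c}$ with $c>1$, so that it beats the $n$ vertices in the union bound — any such $c$ suffices. The secondary point is to notice that the factor $\delta^{-L}$, which may be as large as $n^{o(1)}$, only helps: it inflates $\mu$ while $\delta<1$, so it never weakens the tail bound.
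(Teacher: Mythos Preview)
Your proposal is correct and follows essentially the same approach as the paper: apply the Chernoff lower-tail bound from Theorem~\ref{chernoff} to the binomial degree of a fixed vertex and then union-bound over all $n$ vertices. The paper phrases it via a multiplicative deviation parameter $\gamma=3/4$ rather than your additive $a$, and is looser with the constants, but the argument is identical in substance.
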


\begin{proof} 
\normalfont
The degree $deg(x)$ of a node $x \in V(G)$ follows the binomial distribution $Bin(n-1,\frac{40\log{n}}{\delta^Ln})$. Set $\gamma = \frac{3}{4}$. Then 
\begin{align*}
\Prob{  deg(x) <5 \log{n} \delta^{-L}  } &< e^{- \frac{\gamma^2}{2} 40\log{n} \delta^{-L}}  \ll n^{-1}.
\end{align*}
Taking a union bound over $n$ vertices gives the result.
\end{proof} 

We state the following key lemma, see also \cite{dudek2015rainbow,frieze2012rainbow},  that shows that we can construct for each pair of nodes $x,y$ a special type of a subgraph $G_{x,y}$.  

\begin{lemma}
\label{lem4} 
Let $\epsilon = \frac{1}{\sqrt{ \log \log n}}$, and $k=\epsilon L$.
For all pairs of  vertices $x,y \in [n]$ there exists a subgraph $G_{x,y}(V_{x,y},E_{x,y})$ of $G$ as shown in figure~\ref{fig:fig1}, {\it whp}.
The subgraph consists of two isomorphic vertex disjoint trees $T_x,T_y$ rooted at $x,y$ each of depth $k$.
$T_x$ and $T_y$ both have a branching factor of  $4\log n\delta^{-L}$ for the first level, and  $4\log n$ for the remaining levels.
If the leaves of $T_x$ are $x_1,x_2,\ldots,x_\tau,\tau\geq \delta^{-L}n^{4\epsilon/5}$ then $y_i=f(x_i)$ where $f$ is a natural 
isomorphism.
Between each pair of leaves $(x_i,y_i),i=1,2,\ldots,m$ 
there is a path $P_i$ of length  $(1+2\epsilon) L$. The paths $P_i,i=1,2,\ldots,\tau,\ldots$ are edge disjoint.
\end{lemma}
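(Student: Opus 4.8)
The plan is to build the subgraph $G_{x,y}$ by a two-phase breadth-first exploration and to control it with first- and second-moment / concentration arguments on the random graph $G(n,p)$ with $p = \frac{40\log n}{n}$ (using the boosted degree bound from Lemma~\ref{lem1} only at the very first level). First I would fix a pair $x,y$ and grow $T_x$ by BFS: at level one take $4\log n\,\delta^{-L}$ neighbors of $x$ (available \whp by Lemma~\ref{lem1}), and at each subsequent level expand each current leaf by $4\log n$ fresh vertices, for $\epsilon L$ levels. The standard fact here is that while the explored set has size $n^{o(1)}$, a BFS in $G(n,p)$ behaves like a Galton–Watson tree with offspring mean $\approx np = 40\log n$: the number of collisions (edges back into the already-explored set, or to $T_y$) is, by a union bound, $O(|\text{explored}|^2 p) = n^{o(1)}$, which is a vanishing fraction of the leaves, so after discarding spoiled branches we still retain $\tau \ge \delta^{-L} n^{4\epsilon/5}$ leaves (the exponent coming from $(4\log n)^{\epsilon L} = n^{\Theta(\epsilon)}$ together with the $\delta^{-L}$ boost at level one, minus lower-order loss terms). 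I would grow $T_y$ the same way, forbidding all vertices of $T_x$, which costs only another $n^{o(1)}$ in collisions.

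Next I would connect leaf $x_i$ to its isomorphic copy $y_i$. For each $i$ independently grow node-disjoint trees $T_{x_i}$ from $x_i$ and $T_{y_i}$ from $y_i$, each of depth about $(\tfrac12+\epsilon)L$ with branching $4\log n$, again deleting everything used so far; since the total vertex budget across all $i$ is still $n^{1-o(1)}$, the Galton–Watson approximation remains valid and each such tree survives with $n^{1-o(1)}$-many leaves \whp. Then I would argue that $T_{x_i}$ and $T_{y_i}$ are joined by an edge: the leaf sets have sizes $a,b$ with $ab \cdot p = \Omega(n^{1-o(1)} \cdot \log n / n) \to \infty$, so the probability of \emph{no} edge between them is $(1-p)^{ab} = e^{-\omega(1)}$; a union bound over the $\le n^2$ choices of $(x,y)$ and over the polynomially-many $i$ still gives failure probability $o(1)$ (indeed $\le n^{-3}$ after tuning constants). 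Concatenating the path $x \to x_i$ in $T_x$, the edge $x_ix_i'$ — here I mean the connecting edge — wait, more precisely the path $x_i \to$ leaf $\to$ matching edge $\to$ leaf $\to y_i$, and the path $y_i \to y$ in $T_y$, yields a path of length $\epsilon L + (1+2\epsilon)L + \epsilon L$; I would tune the depth parameters so the total is exactly $(1+2\epsilon)L$ as claimed (adjusting the intermediate tree depth to $\tfrac12 L$ rather than $(\tfrac12+\epsilon)L$, or absorbing the discrepancy into the $\epsilon$ bookkeeping). Crucially, because every tree in the construction is grown on vertices disjoint from all previously used vertices, the resulting paths $P_i$ are pairwise edge disjoint by construction.

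The main obstacle, and the step deserving the most care, is the survival analysis of the BFS trees: I must show that the adverse events — a BFS level shrinking because too many chosen neighbors collide with already-explored vertices, or the level-one degree being too small — happen with total probability $o(n^{-2})$ so the union bound over all $\binom{n}{2}$ pairs goes through. This is where the read-$k$ / Chernoff machinery of Section~\ref{sec:prelim} enters: the indicator that a given vertex is a "good" child of a given BFS node is a function of few underlying edge variables, so the per-level count concentrates around its mean up to a $(1-o(1))$ factor, and iterating over the $O(\log n/\log\log n)$ levels loses at most a $(1-o(1))^{O(L)}$ factor overall, which is still $n^{-o(1)}$ — hence the stated $\tau \ge \delta^{-L} n^{4\epsilon/5}$ bound with room to spare. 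I would also need to verify the bookkeeping that the connecting-edge existence bound beats a union bound over pairs and indices; this is routine once the leaf counts are pinned down. The remaining claims (isomorphism $y_i = f(x_i)$, edge-disjointness, path length) are then immediate from the construction. I expect essentially all of this to parallel the corresponding argument in \cite{tsourakakis2017predicting}, the only genuine change being the $\delta^{-L}$ amplification forced into the first level, which is exactly what makes the later application of Theorem~\ref{thm:readk} to the plurality/majority vote go through.
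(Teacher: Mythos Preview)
Your proposal is correct and follows essentially the same approach as the paper: the paper does not give a self-contained proof of Lemma~\ref{lem4} but defers to \cite{tsourakakis2017predicting}, noting that the only change is the boosted branching factor $\Theta(\log n\,\delta^{-L})$ at the first level of $T_x,T_y$ --- exactly the modification you identify. Your BFS growth with collision control, the leaf-count estimate via $(4\log n)^{\epsilon L}\cdot \delta^{-L}$, the connection of $x_i$ to $y_i$ through two depth-$(\tfrac12+\epsilon)L$ trees joined by a single crossing edge, and the union bound over all pairs all match Algorithm~\ref{alg:edgeDisjointPaths} and the argument in \cite{tsourakakis2017predicting,frieze2012rainbow}; your path-length bookkeeping hiccup resolves once you note that the $(1+2\epsilon)L$ in the lemma refers to the length of $P_i$ from $x_i$ to $y_i$, not from $x$ to $y$.
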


\begin{figure*}[t]
\centering
\includegraphics[width=0.5\textwidth]{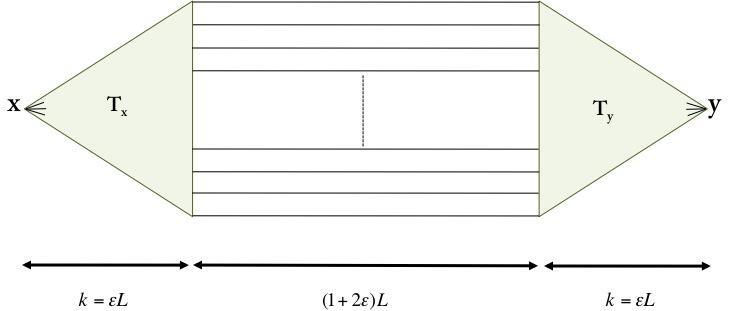} \\
\caption{\label{fig:fig1}  We create for each pair of nodes $x,y$ two node disjoint trees $T_x,T_y$ whose leaves can be matched via a natural isomorphism and linked with edge disjoint paths. For details, see Lemma~\ref{lem4}, and \cite{frieze2012rainbow,tsourakakis2014mathematical}.}
\end{figure*}

We outline that the events hold with large enough probability. For a detailed proof, please check \cite{tsourakakis2017predicting}.  The only difference with the proof of Lemma 4 in \cite{tsourakakis2017predicting} is that for the first level of  trees $T_x,T_y$, we choose $\frac{5\log n}{\delta^L}$ neighbors of $x,y$ respectively. For all other levels we use a branching factor equal to $4\log n$. The proof of Theorem~\ref{thm:thrm1} follows.  

\begin{proof}[Theorem~\ref{thm:thrm1}]
Fix a pair of nodes $x,y \in V(G)$, and suppose $x,y$ belong to the same cluster (the other case is treated in the same way).  Let $Y_1,\ldots,Y_N$  be the signs of the $N$ edge  disjoint paths connecting them, i.e., $Y_i \in \{-1,+1\}$ for all $i$. Also let $Y= \sum_{i=1}^N Y_i$. Notice that $\{Y_1,\ldots,Y_N\}$ is a read-$k$ family where $k=\frac{N}{4\log{n}\delta^{-L}}$.   By the linearity of expectation, and Lemma~\ref{lem4} we obtain 
  
  $$\Mean{Y} = N\delta^L\geq n^{\tfrac{4}{5}\epsilon} \delta^{L}.$$

\noindent By applying Theorem~\ref{thm:readk} we obtain  

\begin{align*}
\Prob{Y<0} &= \Prob{Y-\Mean{Y} < -\Mean{Y} } \leq \exp \left( -\frac{n^{4/5\epsilon} \delta^L}{\frac{2n^{4/5\epsilon}  }{4\delta^{-L}\log{n}}} \right)  =o(n^{-2}).
\end{align*}

\noindent A union bound over ${n \choose 2}$ pairs completes the proof. 
\end{proof}

\spara{Algorithm for Learning a Joint Alignment, $k\geq 3$.}  When $q= 0$, so there are no errors from $\tilde{f}(e)$, the edge queries
would allow us to determine the difference between the group numbers of vertices at the start and end of any path, and in particular would allow us
to determine if the groups were the same.  However, when $q>0$ the actual difference between the cluster ids of $x,y$, i.e., $g(x)-g(y)$ is perturbed by a certain amount of noise. In the following we discuss how we can tackle this issue. Since the proof of Theorem~\ref{mainthm} overlaps with the proof of Theorem~\ref{thm:thrm1} for $k=2$,  we outline the main differences. The idea is still the same: among the differences reported by the large number of paths we create between nodes $x,y$, the correct answer $g(x)-g(y)$ will be the plurality vote with large enough probability. The pseudocode is shown in Algorithm~\ref{alg:3cc}.

\begin{algorithm}[t]
\caption{\label{alg:3cc} Learning Joint Alignment for $k \geq 3$} 
 \begin{algorithmic} 
\STATE $L \leftarrow \diam$
\STATE Perform $ 20n\log{n} \delta^{-L}$ queries uniformly at random.
\STATE Let $G(V,E,\f)$ be the resulting graph
 \FOR{each item pair $x,y$} 
 \STATE {$\mathcal{P}_{x,y}=\{P_1,\ldots,P_N\} \leftarrow$ Almost-Edge-Disjoint-Paths($x,y$)}  
\STATE  $Y_i(x,y) \leftarrow \sum_{e \in P_i} \f(e)$ for $i=1,\ldots,N$  
 \STATE Output the plurality vote among $\{Y_1(x,y),\ldots, Y_N(x,y)\}$ as the estimate of $g(x)-g(y)$
 \ENDFOR
\end{algorithmic}
\end{algorithm}

\begin{proof}[Theorem~\ref{mainthm}]
 Let us return to the basic version of our Model, and let $X(e) \in \{-1,0,1\}$
for $e=(x,y)$ be 
$$\tilde{f}(e) - (g(x) - g(y)) \bmod k.$$
\noindent Then given a path between two vertices $x$ and $y$, 
$$g(y) = g(x) + \sum_{e \in P_{xy}} \tilde{f}(e) - \sum_{e \in P_{xy}} X(e) \bmod k.$$
Our question is now what is $Z_{xy} = \sum_{e \in P_{xy}} X(e) \bmod k$.  
We would like that
$Z_{xy}$ be (even slightly) more highly concentrated on 0 than on other
values, so that when $g(x) = g(y)$, we find that the sum of the return
values from our algorithm, 
$\sum_{e \in P_{xy}} \tilde{f}(e) \bmod k$, is most likely to be 0.
We could then conclude by looking over many almost  edge-disjoint paths that if this sum is 0
over a plurality of the paths, then $x$ and $y$ are in the same group \whp, i.e.,  the plurality value will equal g(y) - g(x) \whp.

For our simple error model, the sum $\sum_{e \in P_{xy}} X(e) \bmod k$
behaves like a simple lazy random walk on the cycle of values modulo
$k$, where the probability of remaining in the same state at each step
is $q$.  Let us consider this Markov chain on the values modulo $k$;
we refer to the values as states.  
Let $p_{ij}^{t}$ be the probability of going from state $i$
to state $j$ after $t$ steps in such a walk.  It is well known than
one can derive explicit formulas for $p_{ij}^t$;  see e.g.  \cite[Chapter XVI.2]{feller1968introduction}.  It also follows by simply finding the eigenvalues and
eigenvectors of the matrix corresponding to the Markov chain and using 
that representation.  One can check the resulting forms to determine that 
$p_{0j}^{t}$ is maximized when $j=0$, and to determine the corresponding gap
$\max_{j\in[1,k-1]} |p_{00}^{t}-p_{0j}|^{t}$.  Based on this gap, we can apply
Chernoff-type bounds as in Theorem~\ref{thm:readk} to show that the plurality of  edge-disjoint
paths will have error 0, allowing us to determine whether the endpoints of the path
$x$ and $y$ are in the same group with high probability.  

The simplest example is with $k=3$ groups, where we find
$$p_{00}^{t} = \frac{1}{3} + \frac{2}{3}\left (1-3q/2 \right )^t,$$
and 
$$p_{01}^{t} = p_{02}^{t} = \frac{1}{3} - \frac{1}{3}\left (1-3q/2 \right )^t.$$
In our case $t = L$, and we see that for any $q < 2/3$, $p_{00}^{t}$ is large enough
that we can detect paths using the same argument as for $k=2$.

For general $k$, we use that the eigenvalues of the matrix
\[
\begin{bmatrix}
    1-q & q/2 & 0 & \dots & q/2    \\
    q/2   & 1-q & q/2 & \dots & 0    \\
    \vdots & \vdots & \vdots & \vdots & \ddots \\
    q/2 & 0 & 0 & \dots  & 1-q  
\end{bmatrix}
\]
are 
$1-q+q \cos(2\pi j/k), j = 0, \ldots, k-1$, with the $j$-th corresponding 
eigenvector being $[1, \omega^j, \omega^{2j}, \ldots, \omega^{j(k-1)}]$ where
$\omega = e^{2\pi i/k}$ is a primitive $k$-th root of unity.  Here, $i$ is not an index but square root of -1, i.e., $i=\sqrt{-1}$.
In this case we have
$$p_{00}^{t} = \frac{1}{k} + \frac{1}{k}\sum_{j=1}^{k-1} \big(1-q +q \cos(2\pi j/k)\big)^t.$$
Note that $p_{00}^{t} > 1/k$.  
Some algebra reveals that the next largest value of $p_{0j}^{t}$ belongs to $p_{01}^{t}$, 
and equals
$$p_{01}^{t} = \frac{1}{k} + \frac{1}{k}\sum_{j=1}^{k-1} \omega^{-j} \big(1-q +q \cos(2\pi j/k)\big)^t.$$
We therefore see that the error between ends of a path again have the plurality value 0, with a gap of at least 
$$p_{00}^{t} - p_{01}^{t} \geq 2(1-\cos(2\pi/k))(1-q +q \cos(2\pi/k))^t.$$
This gap is constant for any constant $k \geq 3$ and $q \leq 1/2$.
\end{proof}

\noindent As we have already mentioned, the same approach could be used for the more general setting where
$$\tilde{f}(e) = g(x)-g(y)+j \text{~~with probability~~} q_j, 0 \leq j< k,$$
but now one works with the Markov chain matrix 
\[
\begin{bmatrix}
    q_0 & q_1 & q_2 & \dots & q_{k-1} \\
    q_{k-1}   & q_0 & q_1 & \dots & q_{k-2} \\
    \vdots & \vdots & \vdots &\ddots & \vdots \\
    q_1 & q_2 & q_3 & \dots & q_0
\end{bmatrix}.
\]

 \section{Conclusion} 
 \label{sec:concl} 
 
In this work we studied the problem of learning a joint alignment from pairwise differences using a noisy oracle.  Based on techniques developed in our previous work \cite{tsourakakis2017predicting}, we show how we can recover a latent alignment \whp using $O( n^{1+o(1)})$ queries. Since the time of the original publication \cite{mitzenmacher2018joint}, the key open question has been optimally  by  Larsen and the authors of the paper by providing an optimal (up to constants) non-adaptive algorithm \cite{LMToptimal2019}. An interesting open direction is to explore further adaptive algorithms for joint alignment. Finally, developing algorithms for approximate recovery is an interesting open problem. 


\end{document}